\newcommand{\bl}{b}
\newcommand{\GR}[2][{}]{\ensuremath{\mathbb{R}^{{#1}}_{#2}}}
\newcommand{\GC}[2][{}]{\ensuremath{\mathbb{C}^{{#1}}_{#2}}}
\newcommand{\Spin}[2][{}]{\ensuremath{\textup{Spin}^{#1}({#2})}}
\newcommand{\Pin}[2][{}]{\ensuremath{\textup{Pin}^{#1}({#2})}}
\newcommand{\SO}[1]{\ensuremath{\text{SO}({#1})}}
\newcommand{\spin}[1]{\ensuremath{\mathfrak{spin}({#1})}}
\DeclarePairedDelimiter\floor{\lfloor}{\rfloor}
\DeclarePairedDelimiter\ceil{\lceil}{\rceil}
\newtheorem{theorem}{Theorem}
\theoremstyle{definition}
\newtheorem{definition}{Definition}[section]
\theoremstyle{remark}
\title{From Invariant Decomposition to Spinors
% : geometry of spinors
}
\author[M. Roelfs]{Martin Roelfs \orcidlink{0000-0002-8646-7693}}
\address{%
    {Department of Mathematics} \\
    {University of Antwerp} \\
    {Antwerpen} \\
    {Belgium}
}
\email{martin.roelfs@uantwerpen.be}
\author[D. Eelbode]{David Eelbode}
\address{%
    {Department of Mathematics} \\
    {University of Antwerp} \\
    {Antwerpen} \\
    {Belgium}
}
\email{david.eelbode@uantwerpen.be}
\author[S. De Keninck]{Steven De Keninck \orcidlink{0000-0002-6870-1714}}
\address{%
  Informatics Institute \\
  University of Amsterdam \\
  Amsterdam \\
  The Netherlands
}
\email{steven@enki.ws}
\date{}
\begin{document}

\begin{abstract}
    Plane-based Geometric Algebra (PGA) has revealed points in a $d$-dimensional pseudo-Euclidean space \GR{p,q,1} to be represented by $d$-blades rather than vectors. 
    This discovery allows points to be factored into $d$ orthogonal hyperplanes, establishing points as pseudoscalars of a local geometric algebra \GR{pq}. 
    Astonishingly, the non-uniqueness of this factorization reveals the existence of a local $\Spin{p,q}$ geometric gauge group at each point. 
    Moreover, a point can alternatively be factored into a product of the elements of the Cartan subalgebra of $\spin{p,q}$, which are traditionally used to label spinor representations.
    Therefore, points reveal previously hidden geometric foundations for some of quantum field theory's mysteries.
    This work outlines the impact of PGA on the study of spinor representations in any number of dimensions, and is the first in a research programme exploring the consequences of this insight.
\end{abstract}

\keywords{Spinors, Invariant decomposition, Geometric gauge, Lie groups}

\maketitle

\section{Introduction}
While some of the mystery surrounding spinors is undoubtedly true quantum weirdness, there are also some misunderstood properties of spinors that are utterly classical.
Chief amongst these is the infamous 720 degrees of rotation property, and even some local gauge groups might be purely geometrical in origin.
In order to illustrate that these are classical properties, a quick introduction 
into the reflection mindset of Plane-based Geometric Algebra (PGA) is in order.

PGA embraces the fact that reflections are the atoms of geometry, an insight immortalized in the Cartan-Dieudonn\'{e} theorem: in $d$ dimensions, any isometry can be described as the composition of at most $d$ reflections in hyperplanes.
Combined with the insight that hyperplanes are defined by linear equations in any number of dimensions, one directly identifies the vectors of a Clifford algebra with \emph{hyperplanes}, and not with arrows pointing at points. 
The resulting algebra naturally identifies all elements of geometry as invariants of the isometries they cause when acting as transformations \cite{GSG}.

As an example, \cref{fig:elements_of_geometry} shows all elements of 3DPGA. 
Firstly, a reflection in a plane leaves a single plane invariant, and thus planes are represented by vectors (1-blades).
Secondly, reflecting in two orthogonal planes is identical to a single line-reflection, and thus lines are represented by 2-blades.
Finally, reflecting in three orthogonal planes is identical to a single point-reflection, and thus points are represented by 3-blades.
Moreover, reflections in planes can more generally be combined into translations or rotations (bireflections), transflections or rotoreflections (trireflections), and screws (quadreflections). More generally, such elements are referred to as $k$-reflections.
This mindset raises an interesting question: if points, lines, and planes are all \emph{physical}, in the sense that points, lines, and planes are meaningful objects in the phy\-sical world, then are arbitrary bireflections, trireflections, etc. also physical?
It is our hypothesis that they are indeed naturally occurring, and these are the mysterious objects known as spinors.

This conference proceeding will be the first in a series of manuscripts in which we aim to lay out the case for this hypothesis.
The current manuscript will aim to give an intuitive justification for this claim, while being necessarily brief about some of the details, which will be the subject of future work.

This manuscript is organized as follows: first, we give an overview of several important geometrical principles for understanding spinors in \cref{sec:geometry101}: the first postulate, geometric gauges, double covers, and invariant decomposition. Second, \cref{sec:spinors} starts by giving an introduction into algebraic spinors, and then shows how a very similar story can be told using $k$-reflections. \Cref{sec:spinors} concludes with a brief comparison with Hestenes spinors in \cref{sec:hestenes_spinors}.

\begin{figure}
    \centering
    \includegraphics[width=\textwidth]{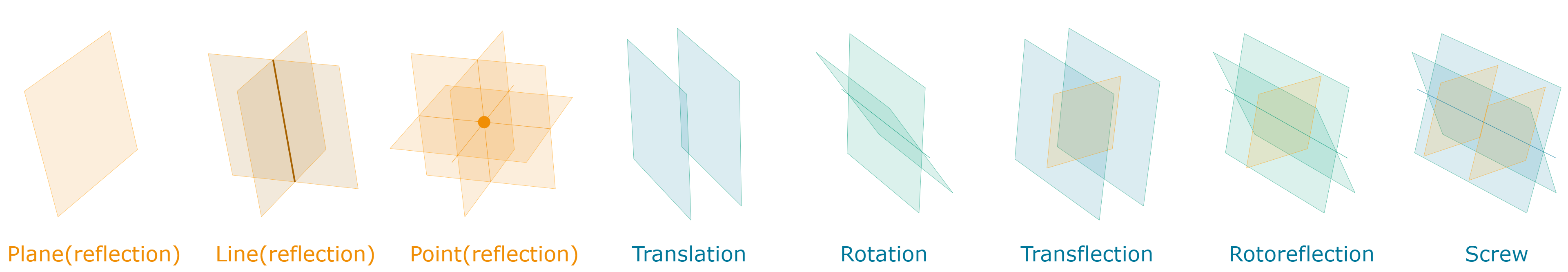}
    \caption{All products of reflections in 3DPGA (\GR{3,0,1}). From left to right: plane(reflection), represented by a vector $u \in \GR[(1)]{3,0,1}$, line(reflection), represented by two orthogonal plane-reflections $u \wedge v \in \GR[(2)]{3,0,1}$, point(reflection), represented by three orthogonal plane-reflections $u \wedge v \wedge w \in \GR[(3)]{3,0,1}$. More generally, plane-reflections are the atoms of all the orthogonal transformations in a space.}
    \label{fig:elements_of_geometry}
\end{figure}

\section{Geometry 101}\label{sec:geometry101}

\subsection{First Postulate Violation and the Principle of Covariance}
\begin{figure}[b]
    \centering
    \includegraphics[width=\textwidth]{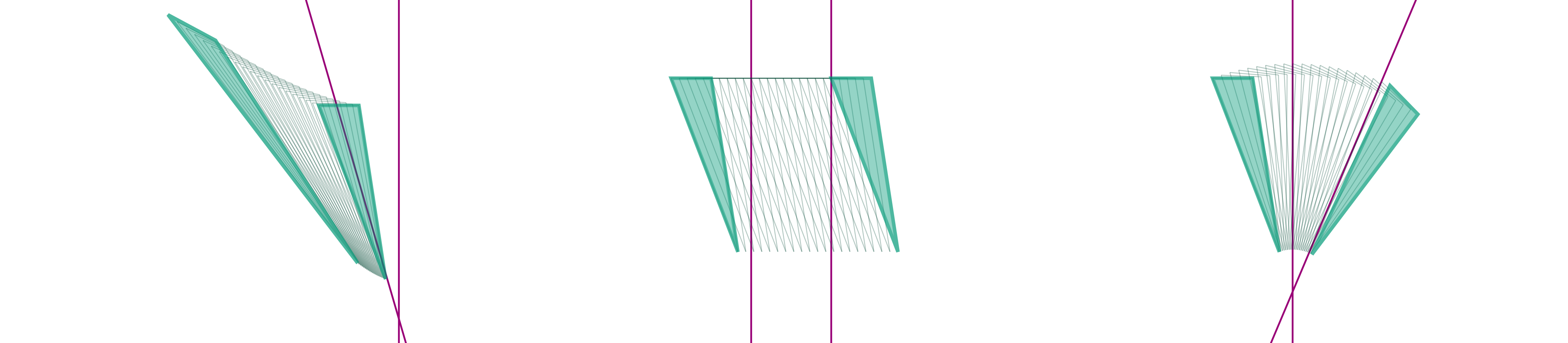}
    \caption{The three fundamental types of bireflection. Left) two reflection in line-time make a boost. Middle) two reflections in parallel lines make a translation. Right) Two reflections in intersecting lines in the plane make a rotation.}
    \label{fig:bireflections}
\end{figure}
Euclid's first postulate essentially states that ``two points define a single line''. However, this statement should have read 
\begin{displayquote}
    ``Two points define a single line, and nothing [...] else!'' \cite{GAME23Steven}
\end{displayquote}
The latter clause implies that any binary operation between two points, when it produces a point, should always produce a point on that line. 
For example, the sum of two points of equal weight should be their midpoint, 
and two points can be joined into a unique line.
In the ``vectors are arrows'' mindset, \emph{first postulate compliance} is not at all guaranteed.
This is easily illustrated by the addition of two arrows pointing at points, which in general does not produce an arrow pointing at the midpoint.
Because of this, the scientific literature is instead full of \emph{first postulate violations}.
With the plane-based mindset however, first postulate compliance is built into the framework.
And once done correctly at the level of points, it automatically extends to lines, planes, etc., and the interactions between these elements.

An important consequence of the first postulate is that it is the elements which determine the dimension of a problem, not the algebra. 
For example, two reflections in intersecting planes define a rotation around the intersection line. 
But there is a vantage point in which this intersection line looks like a point, and the planes look like lines.
Any bireflection can therefore be made to look two-dimensional (\cref{fig:bireflections}). Simply put,
    \begin{displayquote}
        ``Everything is as simple as you can make it look.''
    \end{displayquote}
So we should take a dimension-agnostic approach to geometry
and moreover, we should treat even and odd elements on equal footing, since e.g. points in even/odd numbers of dimensions are represented using even/odd elements respectively.
So true adherence to the ``Principle of Covariance'' implies a formulation of physical laws using those physical quantities on which observers with frames of reference \emph{of possibly different numbers of dimensions} could unambiguously correlate. 
The surprising thing is that plane-based geometric algebra, with its in-built first postulate compliance, allows us to do exactly that.

The first postulate has direct consequences for the study of spinors, 
because one typically starts by choosing the dimension of the space, and then describes the spinor representations in that space.
But when the chosen dimension is odd there is only a single irreducible Dirac representation of spinors, whereas if the chosen dimension is even the now reducible Dirac representation can be decomposed into two irreducible Weyl representations.
But how can this arbitrary cutoff dimension have such severe consequences for the properties of spinors?
Surely the first postulate demands that the Weyl spinor components of 2D spinors should still appear like Weyl spinors when viewed from 17 dimensions?
This is a severe first postulate violation, one which we will aim to avoid.

\subsection{Geometric Gauges}\label{sec:gauge}

Before we dive into the algebraic details of the invariant decomposition, let us spend a paragraph to examine its geometric significance and origin. Every element of a Pin group can be written as a product of 
reflections.
The composition of one, two and three reflections in the plane is depicted in \cref{fig:reflections123}, where the arrows indicate the effect of the transformation on point probes in the space. In orange we show the main invariant. 
On the left we see the effect of a single reflection $u$.
All points in space are reflected to the other side of the mirror, except those that lie on the mirror, which remain in place.
Hence the orange and green lines coincide.
The relationship between the group element and its invariant is clear, and this remains true when we consider the center image, depicting the composition of two reflections $u$ and $v$. 
Here the resulting group element is a rotation, 
and all points in space rotate around the intersection point $u \wedge v$ of the two mirrors.
The only invariant point is the intersection point itself.
There is a second hidden/obvious invariant: the entire plane itself.
When three reflections are composed, as shown on the right, not a single point on the plane is left invariant. 
Nonetheless, there is again an invariant line, shown in orange.
However, there is no longer a clear relation between the orange invariant line and the reflections that were used to construct the transformation, so where did this line come from?
The answer is \emph{geometric gauges}.

\begin{figure}[h]
    \centering
    \includegraphics[width=\textwidth]{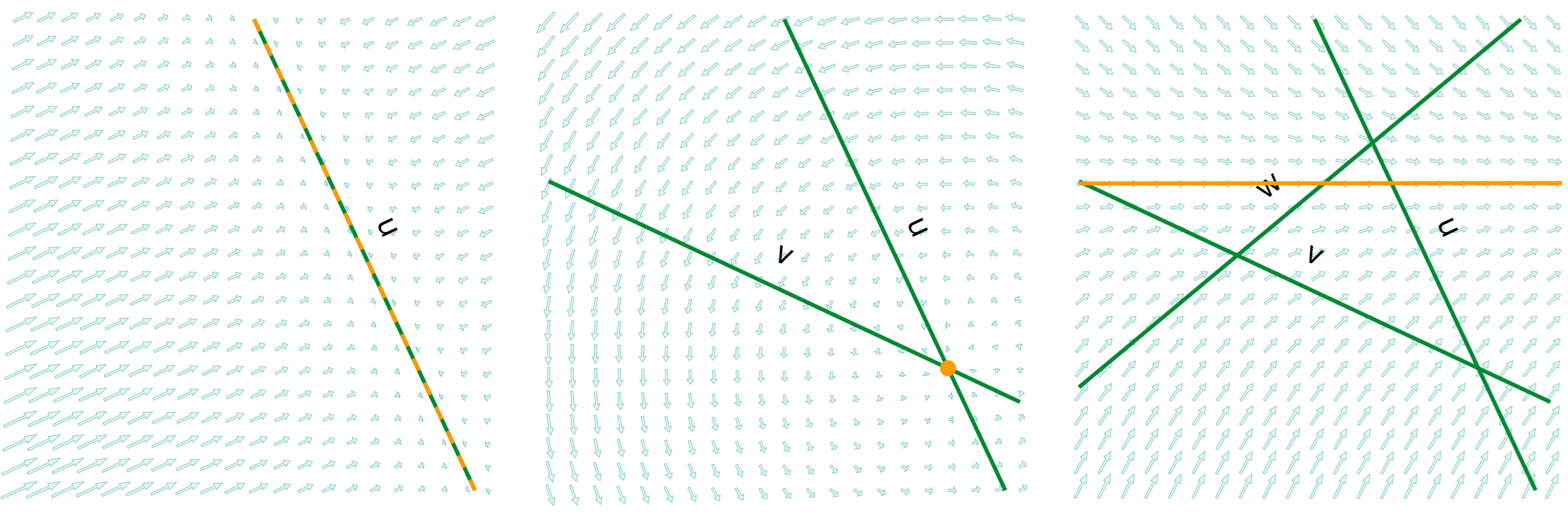}
    \caption{Reflections in 1, 2 and 3 lines (green), and their main invariants (orange). }
    \label{fig:reflections123}
\end{figure}

The factorisation of $k$-reflections into reflections is not unique when $k>1$. For example, \cref{fig:gauges} shows several possible factorisations of a bireflection. If the intersection point of -- and relative angle between -- the two reflection lines is kept constant, the same rotation will always result. That is, every two adjacent reflections in a $k$-reflection can be freely rotated, or \emph{gauged}, around their intersection point without changing the composition.   

\begin{figure}[h]
    \centering
    \includegraphics[width=\textwidth]{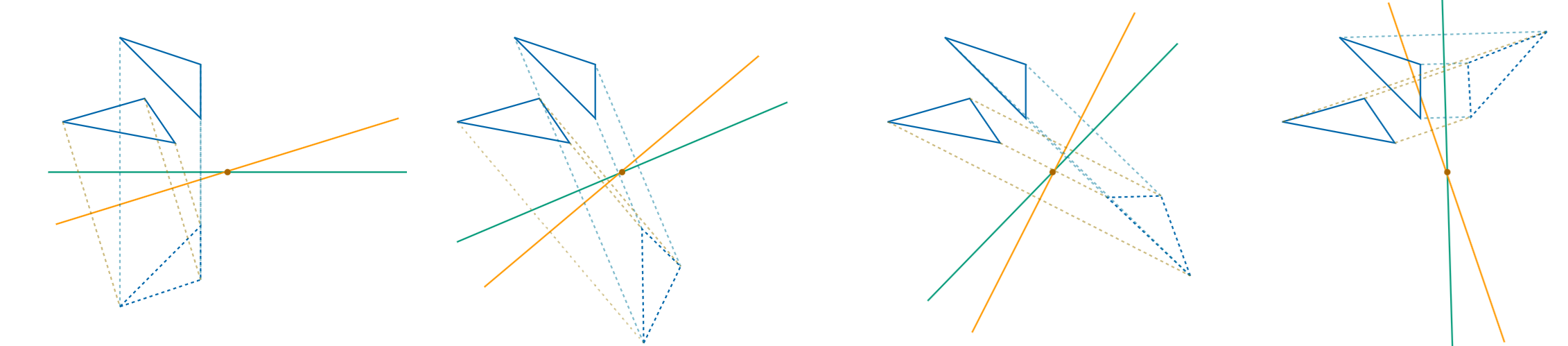}
    \caption{Four possible factorisations of a bireflection. Notice that the initial and final states are the same, and hence there is a gauge degree of freedom that can be chosen freely.}
    \label{fig:gauges}
\end{figure}

The geometric gauges can now be used to transform our initial factorisation from \cref{fig:reflections123}, into a (pseudo) canonical factorisation where we try to orthogonalize the reflections. 
Doing so reveals the \emph{invariant decomposition} of the trireflection $wvu$ in \cref{fig:transflections}, 
into an orthogonal (and commuting) reflection $w'$ and translation $v'' u'$ and hence explains the mysterious appearance of the orange invariant.

\begin{figure}[h]
    \centering
    \includegraphics[width=\textwidth]{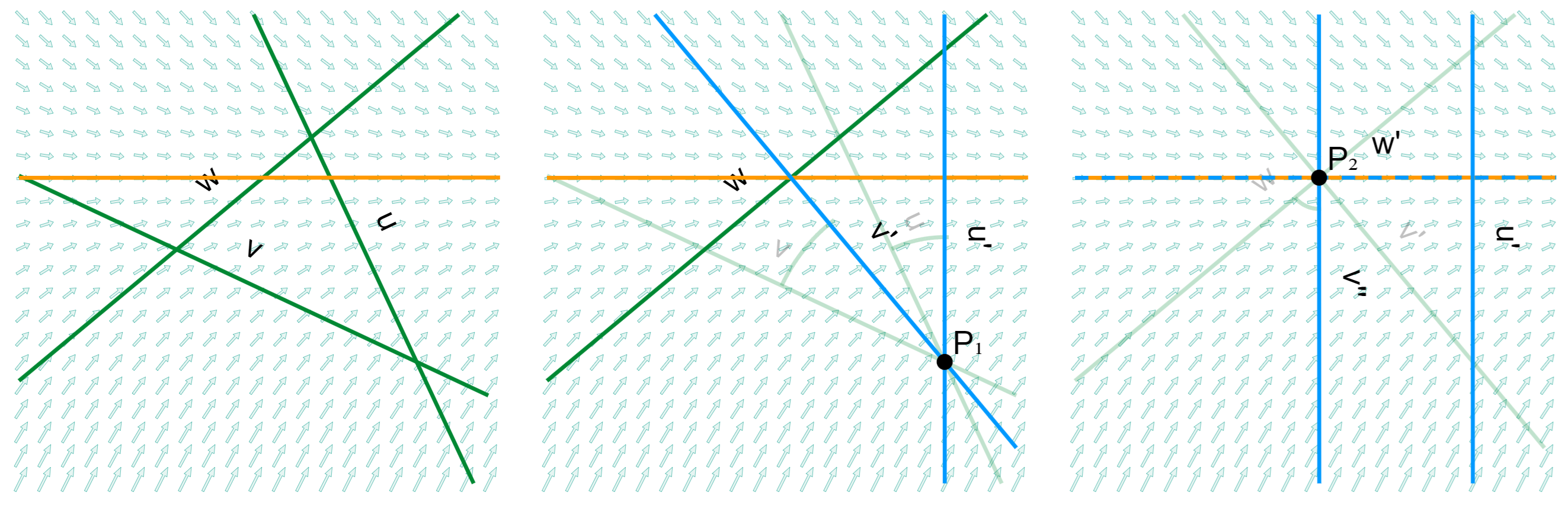}
    \caption{ From left to center: the gauge at $P_1$ is used to make $v' \perp w$. From center to right: the gauge at $P_2$ is used to make $w'\perp u'$. This reveals any composition of three reflections in the plane to always be a transflection, with invariant line $w'$. }
    \label{fig:transflections}
\end{figure}

\subsection{Double cover on the flip side}
\label{sec:double_cover}

\begin{figure}[]
    \centering
    \includegraphics[width=\textwidth]{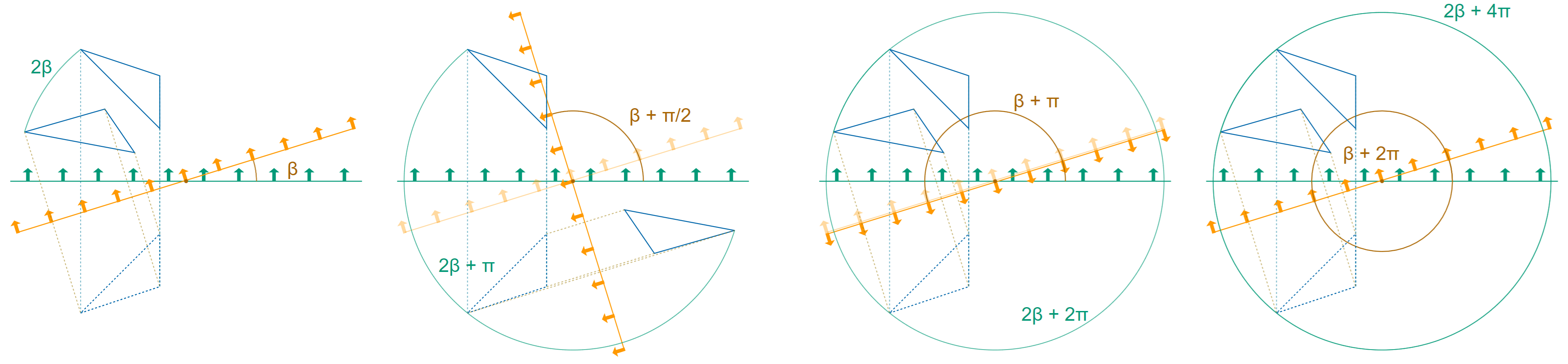}
    \caption{A triangle reflected in two intersecting mirrors separated by an angle $\beta$ rotates by $2\beta$. 
    Thus, when the angle between the mirrors is $\beta + \pi$ (third image), 
    the triangle has been rotated over $2 \beta + 2 \pi$, which is equivalent to a rotation over $2 \beta$. 
    But notice that the orange reflection is not back to where it started: instead, it faces the other way.
    The original scenario is restored only when the mirrors are at $\beta + 2 \pi$.
    }
    \label{fig:oriented_reflections}
\end{figure}

Despite explanations of rotations being ubiquitous in physics literature, 
their explanation is sadly always based on \SO{3} rather than \Spin{3}, 
which leads to an incomplete understanding of the true nature of rotations.
We start with the simplest case: a rotation around a point $B$ in a plane. 
Such a rotation is identical to two reflections in intersecting lines through $B$, and hence it is a \emph{bireflection}.
Denoting these reflections as $u$ and $v$, the bireflection $R = vu$ has the exponential form
    \begin{equation}\label{eq:evolution}
        R(\theta) = e^{\theta B} R_0.
    \end{equation}
\Cref{fig:oriented_reflections} shows what happens to a triangle $\blacktriangle$ that is being rotated around the point $B$ using conjugation:
    \[ \blacktriangle(\theta) = R \blacktriangle \widetilde{R}. \]
We start from the identity bireflection $R = 1 = u^2$, which corresponds to $\theta=0$ and $R_0 = 1$, and rotate $v$ until $u$ and $v$ are orthogonal, and hence $\theta=\pi / 2$.
Notice how the triangle has already rotated over $\pi$ by this point, which is obvious by looking at the constituent reflections.
If we now keep rotating $v$ until the angle between the reflections is $\theta = \pi$, then the triangle returns to its original position, but notice how the bireflection did not! 
Since we were careful to distinguish the front and the back of the reflections $u$ and $v$, it is clear that the bireflection did not yet return to its original position: the triangle has to be taken for another round before the bireflection is returned to its original position.
To any physicist, this will be reminiscent of how spinors are said to have this mysterious quality whereby they return to their original position only after 720 degrees of rotation. 
But there is nothing mysterious happening here: the bireflection itself behaves fully classical and is back to its original state once $v$ has been rotated over $2 \pi$.
By realizing that rotations are bireflections, and that reflections are oriented, the mystery disappears.

The confusion likely occurred because we are used to thinking about the object being rotated, not the object doing the rotation, which is in some sense its ``square root''. 
Perhaps what was lacking was a good mental model for directly visualizing the object performing the rotation itself, which is what bireflections offer.
And spinors are the things  doing the rotation, which is evidenced by them always acting on some operator, such as for example in expectation values of operators as $\langle \hat{O} \rangle = \expval{\hat{O}}{\psi}$, in the Dirac Lagrangian as $\mathcal{L} = \bar{\Psi} (i \partial - m) \Psi$, etc.
Indeed, if we replace $R$ by $\psi$ and $\theta$ by $t$  in the above equations, the equations start to look eerily familiar to anyone familiar with quantum mechanics.

It is often stated that \SO{n} is the rotation group, while \Spin{n} is its double cover, which is often seen as a purely algebraic concept with no geometrical interpretation. But as \cref{fig:oriented_reflections} shows, the opposite is true: \Spin{n} is the rotation group, while \SO{n} is its half-cover.
The mnemonic here is ``Double-sidedness leads to double-covers''.
Ignore the front and back of mirrors at your own peril.

When done correctly, classical rotations make for a much more illuminating model for the ``strange'' rotation behavior of spinors than the Dirac belt trick or the Filipino wine dance.
But the obvious question is now: is this bireflection a spinor? 
Yes, but it is only the $\ket{+}$ component. 
To find the $\ket{-}$ component, and to understand the generalizations to higher numbers of dimensions, 
we will need the invariant decomposition theorem.

\subsection{Invariant Decomposition}\label{sec:inv_decomposition}

We have seen that two reflections compose into either a rotation, translation, or boost (\cref{fig:bireflections}).
Moreover, we have seen that three reflections in the plane can be factored into a commuting reflection and bireflection (\cref{fig:transflections}).
This principle shows that there are no fundamentally new orthogonal transformations beyond bireflections.
The quadreflection analogue of this principle in 3D is known as the Mozzi-Chasles' theorem:
    \begin{displayquote}``Every three dimensional rigid body motion can be decomposed as a translation along a line followed or preceded by a rotation around the same line.''
    \end{displayquote}
Since every three dimensional rigid body motion is at most a quadreflection, and because both the aforementioned rotation and translation are bireflections, this statement can be put more generally as
    \begin{displayquote}``Every quadreflection can be decomposed into two commuting bireflections.''
    \end{displayquote}
Now suddenly the statement no longer seems dimension or signature dependent, and the generalization to arbitrary dimensions practically writes itself:
    \begin{theorem}[Invariant decomposition]\label{th:invariantdecomposition}
    A product $U=u_1 u_2 \cdots u_\ell$ of $\ell$ reflections $u_i$ can be decomposed into exactly $\ceil{\frac \ell 2}$ simple commuting factors. These are $\floor{\frac \ell 2}$ products of two reflections, and, for odd $\ell$, one extra reflection.
    \end{theorem}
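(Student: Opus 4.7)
The plan is to induct on $\ell$, building on the geometric gauge freedom established in \cref{sec:gauge} together with the trireflection picture of \cref{fig:transflections}. The base cases $\ell=1$ and $\ell=2$ are immediate: a single reflection is already simple, and a product of two reflections is by definition a single bireflection. The orange-invariant construction of \cref{fig:transflections} then covers $\ell=3$ constructively, yielding one reflection orthogonal (hence commuting) with one bireflection.

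For the inductive step I would separate the odd and even cases. In the odd case $\ell = 2m+1$, I would show that the trireflection trick from \cref{fig:transflections} generalizes: since $U$ is an odd versor it must fix some hyperplane $w$; using the adjacent-pair gauges one may slide the constituent reflections so that $u_\ell$ coincides with $w$, whereupon the remaining product $u_1\cdots u_{\ell-1}$ acts within the subspace orthogonal to $w$ and commutes with $u_\ell$. The inductive hypothesis applied to this even product of $\ell-1$ reflections then produces the required $m$ commuting simple bireflections plus the one extra reflection $w$.

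In the even case $\ell=2m$ the goal is to peel off a single commuting simple bireflection. I would select an invariant simple $2$-blade $B$ of $U$ and then use the gauge freedom on consecutive pairs $(u_{2i-1},u_{2i})$ to rearrange the factorization into pairs $U = (u_1'u_2')(u_3'u_4')\cdots(u_{\ell-1}'u_\ell')$ such that $u_1'u_2'$ is the bireflection whose carrier is $B$. By construction the remaining pairs then act in the orthogonal complement of $B$ and so commute with $u_1'u_2'$; the induction hypothesis applied to the reduced product of $\ell-2$ reflections in the complement closes the argument.

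The main obstacle is establishing the existence of such an invariant simple $2$-blade in the first place, particularly in pseudo-Euclidean signatures where the associated skew operator on the underlying vector space can be non-diagonalizable over $\mathbb{R}$ and where null blades must be handled separately. A clean algebraic route is to pass to the bivector logarithm $B$ of $U$ (when it exists) and invoke the invariant decomposition of bivectors $B = B_1 + \cdots + B_m$ into mutually commuting simple summands, after which $U = e^{B_1}\cdots e^{B_m}$ automatically gives commuting simple bireflections; the geometric gauge argument above is then a visualization of this algebraic fact, and the two viewpoints reinforce one another.
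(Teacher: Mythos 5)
The paper does not actually prove \cref{th:invariantdecomposition}; its stated ``proof'' is a citation to \cite{GSG,outer}. So there is no in-text argument to compare yours against --- the question is whether your sketch would close the gap those references are relied upon to close. Geometrically your narrative tracks the paper's own build-up in \cref{sec:gauge} and \cref{fig:transflections}, and your closing paragraph correctly names the algebraic route that the cited works actually take (a bivector logarithm followed by a decomposition of that bivector into mutually commuting simple summands). But the obstacle you flag at the end is not a loose end to be tidied; it is essentially the whole theorem, and your induction quietly assumes existence results whose proofs are what \cite{GSG,outer} supply.

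Concretely, three of your steps presuppose facts that are not elementary outside Euclidean signature. First, in the odd step you assert that an odd versor ``must fix some hyperplane.'' In \GR{d} this follows because the induced orthogonal map is normal, hence semisimple; in \GR{p,q} the map can be non-semisimple (null rotations), a fixed non-null hyperplane need not exist, and one must work with null invariants instead. Second, in the even step, ``select an invariant simple $2$-blade $B$ of $U$'' is precisely the crux and not a hypothesis: for parabolic elements the invariant $2$-spaces are null, and in signatures like \GR{2,2} the eigenstructure of the associated bivector can be genuinely complex --- the paper's own footnote in \cref{sec:inv_decomposition} flags exactly these cases. Moreover, the gauge argument for ``sliding'' consecutive reflections so that one pair carries $B$ while \emph{all} remaining reflections land in $B$'s orthogonal complement is asserted but not justified; adjacent-pair gauges rotate a pair within its own plane, and it is not obvious that this freedom suffices to align the whole factorization at once, so you would need a separate lemma to that effect. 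Third, the bivector logarithm of a general $U \in \Spin{p,q}$ need not exist, and even when it does, splitting $B = B_1 + \cdots + B_m$ into mutually commuting simple $2$-blades is itself a nontrivial theorem (established in \cite{outer} via a Cayley--Hamilton--type factorization), not an available lemma to ``invoke.'' Your induction gives the right scaffolding, but every load-bearing step defers to exactly the results the paper cites in lieu of a proof.
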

    \begin{proof}
        \cite{GSG,outer}. 
    \end{proof}
The invariant decomposition as first introduced in \cite{GSG} and more rigorously described in \cite{outer}, implies that any $U \in \Spin{p, q, r}$, as the product of at most $2k$-reflections, can be decomposed into $k$ commuting bireflections. 
Each of these commuting bireflections is generated by a \emph{simple} bivector, meaning a 2-blade, which is an element of the Lie algebra \spin{p, q, r}.
Thus, the invariant decomposition has an important consequence:
    \begin{displayquote}
        ``Everything can be made to look simple.''
    \end{displayquote}
% In other words, a bireflection in any number of dimensions can always be made to look like either a rotation or a boost around a point in the plane, or a translation along an infinite point, as shown in \cref{fig:bireflections}.
Let us denote the invariant factorization of $U \in \Spin{p, q, r}$ into mutually commuting bireflections $U_i$ as $U = U_1 U_2 \cdots U_k$, where $U_i = c_i + s_i \bl_i$ with $c_i, s_i \in \mathbb{R}$, and $\bl_i$ a simple bivector, which as a 2-blade squares to a scalar: $\lambda_i = \bl_i^2$. As the product of two orthogonal vectors, the simple bivectors $\bl_i$ are the invariant points that characterize each bireflection.

Eigenelements of $U$ under conjugation are those elements $W$ satisfying
    \begin{equation}
        U[W] \coloneqq U W U^{-1} = \gamma W.
    \end{equation}
In what follows we assume that $W = w_1 \cdots w_\ell$ is a product of (possibly null) vectors, given the clear geometric interpretation of such quantities.
From the fact that $U[W]$ is a norm preserving map, we find $W \widetilde{W} = \gamma^2 W \widetilde{W}$, 
leading to the conclusion
that eigenelements have either eigenvalue $\gamma = \pm 1$, or are null.
An eigenelement with $\gamma = \pm 1$ is not just an eigenelement of $U$, but it is an \emph{invariant} of $U$, meaning that it is either completely unaffected by the transformation, or has its orientation inverted at most.
Moreover, an eigenvalue of $\gamma = +1$ implies that $W$ commutes with $U$, while $\gamma = -1$ implies $W$ anti-commutes with $U$.

By the invariant decomposition, we know that $W$ is also an eigenelement of each bireflection $U_i = c_i + s_i \bl_i$ separately, which in turn implies that $W$ is also an eigenelement of the $\bl_i$ under commutation:
    \begin{equation}
            \bl_i \times W \coloneqq \tfrac{1}{2}\comm{\bl_i}{W} = \mu W
    \end{equation}
It is again straightforward to prove that all eigenelements of $\bl_i$ must either be null or have eigenvalue $\mu = 0$, by using the fact that $\bl_i \times (W \widetilde{W}) = 0$ \cite{outer}.

Since $W = w_1 \cdots w_\ell$, and by the property $U[W] \propto \prod_{j=1}^\ell U[w_j]$, it is sufficient to consider eigenvectors $w$ only.
There are three geometrically distinct cases\footnote{More exotic cases are possible when $\lambda \notin \mathbb{R}$, which can happen in signatures like \GR{2,2}, but for the current discussion we stick to Euclidean and Lorentzian signatures where $\lambda \in \mathbb{R}$. For more information on the general case see \cite{GSG, outer}.} to be considered, depending on the square of $\bl_i$:
    \begin{description}
        \item[Boost] When $\lambda_i = \bl_i^2 > 0$, the point $\bl_i$ is the generator of a boost (\cref{fig:bireflections}, left). Since $\bl = u v$ with $u^2 = - v^2 = \pm 1$, we find that the isotropic eigenvectors are given by
            \begin{equation}
                w_{\pm} = \tfrac{1}{2} (u \mp v).
            \end{equation}
        with eigenvalues $\mu_\pm = \pm 1$. From a geometric perspective, the lines $w_+$ and $w_-$ form the lightcone.
        \item[Translation] When $\lambda_i = \bl_i^2 = 0$, the point $\bl_i$ is the generator of a translation (\cref{fig:bireflections}, middle). Since $\bl = u v$ with $u^2 = 0$ and $v^2 \neq 0$, $u$ is an eigenvector of $\bl$ with eigenvalue $\mu = 0$. From a geometric perspective, the line $u$ is the horizon, which is indeed invariant under translation. In the context of spinors, translations will not be considered.
        \item[Rotation] When $\lambda_i = \bl_i^2 < 0$, the point $\bl_i$ is the generator of a rotation in a plane (\cref{fig:bireflections}, right). This implies $u^2 = v^2 = \pm 1$; we shall choose $u^2 = 1$. 
        In general rotations do not leave any line in the plane invariant except for the line at infinity, unless the angle of rotation happens to be a multiple of $\pi$. Thus, there are no solutions to $\bl \times w = \bl \cdot w = \mu w$ for $w \in \GR[(1)]{2}$. 
        If we insist however on having vectorial solutions to this equation, then there is no choice but to consider vectors $w \in \mathbb{C} \otimes \GR[(1)]{2} = \GC[(1)]{2}$. Now we can construct the isotropic eigenvectors
            \begin{equation}
                w_{\pm} = \tfrac{1}{2} (u \pm i v)
            \end{equation}
        with eigenvalues $\mu = \pm i$ assuming $\bl$ is normalized such that $\bl^2 = -1$.
        So in order to find eigenvectors of $\bl$ under commutation, we have no choice but to introduce complex numbers.
        But we should not forget that we are using the commutator of $\bl$ with vectors as a proxy for finding eigenelements $W$ under conjugation by the bireflection $U = \cos(\theta / 2) + \bl \sin(\theta / 2) $.
        The null solutions $w_\pm$ are still valid eigenvectors under conjugation, satisfying $U[w_\pm] = e^{i \theta} w_\pm$, but there are now also real solutions! 
        Firstly, any element $W = \alpha + \beta \bl$ commutes with $U$ for all $\theta$ and is therefore an eigenelement with $\gamma=1$. For general $\theta$ there is no solution corresponding to $\gamma = -1$, but in the special case that $U = \bl$, any line $w = \alpha u + \beta v$ that intersects with the point $\bl$ anti-commutes with $\bl$, and thus satisfies $\bl[w] = - w$.
        So only in the special case of $U = \bl$, we can use conjugation to divide the subspace defined by the point $\bl$ into a commuting and an anti-commuting subspace.
    \end{description}

As the above discussion on rotations highlights, 
there are two ways to define eigenelements under $U \in \Spin{p,q,r}$:
\begin{itemize}
    \item Construct eigenelements $W$ from the eigenvectors of the generators $\bl_j$ under the commutator product: $\bl_j \times w_j = \mu_j w_j$.
    \item Consider eigenelements under point-reflection in $\bl_j$: $\bl_j[W] = \gamma_j W$. 
\end{itemize}
The traditional approach to spinors follows the path of null eigenvectors under commutation with the $\bl_j$ to construct the spinor basis, as we will review in \cref{sec:traditional_spinors}. 
Contrarily, we propose to build a basis out of the geometric invariants under point reflections in $\bl_j$ and moreover, claim that this is the information the traditional spinor basis is trying to capture in the first place. This will be the subject of \cref{sec:geometric_spinors}.

At this point the qualification of basis spinors using the double-sided conjugation by $\bl_j$ might still cause some discomfort, because spinors transform one-sidedly, and thus one might argue that they should also be labelled under a one-sided operation.
But as we have shown in \cref{sec:double_cover}, \cref{eq:evolution}, ordinary rotations also evolve one-sidedly, yet the bireflection $R(t) = \exp(t \bl)$ can be labelled a $\gamma = +1$ invariant for $\bl$ under conjugation.
So labelling can certainly be performed with the double-sided conjugation, despite spinors evolving one-sidedly.

Since the $\gamma = -1$ states do not exist for arbitrary $U_j$, and instead only exist for the special case of point-reflection in the generators $U_j = \bl_j$,
it might seem somewhat arbitrary to single out point reflection in $\bl_j$ in order to distinguish up and down eigenstates states.
But in fact there is very good argument for doing so, and it ties in nicely with the local geometric gauge degrees of freedom that every point has.

\subsection{What is the point?} \label{sec:point}

In order to understand what is special about point reflections, we first have to understand the point.
In a pseudo-Euclidean space of $d = p + q$ dimensions, vectors represent $d-1$ dimensional subspaces called hyperplanes, 2-blades represent $d-2$ dimensional subspaces called hyperlines, ..., $(d-1)$-blades represent 1-dimensional lines, and $d$-blades represent 0 dimensional points.
\Cref{fig:elements_of_geometry} shows this for $d=3$.
The justification for the identification of a point with a $d$-blade is a purely geometric one: a $d$-blade is equivalent to the product of $d$ orthogonal  hyperplanes, and reflection in $d$ orthogonal hyperplanes is identical to a point reflection in their unique intersection point. 
A point therefore \emph{is} the product of $d$ orthogonal hyperplanes.
After all, if it walks like duck, and it quacks like a duck, it is a duck.
Moreover, as we have seen in \cref{sec:double_cover}, hyperplanes have orientation, and therefore so do points!
And lastly and perhaps most importantly, the particular factorization into orthogonal hyperplanes is not unique: there is an entire $\Spin{p, q}$ \emph{local gauge group} at each point!
Points are therefore not some boring zero dimensional quantity that is hardly worth talking about, 
but instead 
\begin{displayquote}
    ``Points are the peak of a mountain, and have internal structure, local gauge degrees of freedom, and orientation.''
\end{displayquote}
Any point $O_d$ in a $d$-dimensional pseudo-Euclidean space can be factorized into $d$ orthogonal hyperplanes $v_i$, or into $k = \floor{d / 2}$ orthogonal hyperlines $\bl_j$:
\begin{equation}\label{eq:point}
    O_d = \prod_{i=1}^d v_i = \begin{cases}
        \qquad\;\; \prod_{j=1}^k \bl_j & d \text{ even} \\
        v_{2k+1} \prod_{j=1}^k \bl_j & d \text{ odd}
    \end{cases} \; .
\end{equation}
The point $O_d$ therefore behaves like the pseudoscalar in the space defined by it, and each of the $\bl_j$ in turn behaves like a pseudoscalar in the 2D subspace it defines.
Using the mutually orthogonal $\bl_j$, 
the entire space defined by $O_d$ can be divided into segments which either commute or anti-commute with each of the $\bl_j$.
In doing so, any element $\zeta \in \GR{p,q}$ of the tangent space at $O_d$ can be decomposed into 
\[ \zeta = \sum_{\vec{s}} \zeta_{\vec{s}}\, , \]
where $\vec{s} = (s_1, s_2, \ldots, s_k)$ with $s_j = \pm 1$ denotes the eigenvalues of $\zeta_{\vec{s}}$ under conjugation with each $\bl_j$.
In other words,
    \[ \bl_j[\zeta_{\vec{s}}] = \bl_j \, \zeta_{\vec{s}} \ \bl_j^{-1} = s_j \zeta_{\vec{s}} \,. \]
This binary labeling is one of the key features of spinors, and the fact that it follows naturally from the factorisation of the point $O_d$ is extremely interesting.
Moreover, when $d$ is even the point $O_d$ commutes with all elements of the even subspace while it anti-commutes with every element of the odd subspace, and therefore it can be used to split any element $\zeta$ into a commuting and anti-commuting part:
    \begin{equation}
        \zeta_{L/R} = \tfrac{1}{2} (\zeta \pm O_d \zeta O_d^{-1}).
    \end{equation}
When $d$ is odd however, $O_d$ commutes with every element in the subspace and can therefore no longer be used to divide the subspace into two equal parts. 
However, \cref{eq:point} makes it evident that $O_d = v_{2k+1} O_{2k}$, and thus $O_{2k} = \prod_{j=1}^k \bl_j$ can still be used to divide the subspace into two equal parts.

In conclusion, rather than being zero dimensional quantities that hardly need any explanation, points have a lot of features that we traditionally use when constructing spinor representations.
To understand this relationship better,
we should have a look at the traditional approach to spinors, before we can continue with a geometrical approach to spinors.

\section{Spinors}\label{sec:spinors}
In this section we will first discuss traditional spinors, and will then proceed to introduce \emph{pointors}.
By traditional spinors, we are referring to the definition of spinors as elements of a minimal left ideal, upon which the spin group acts from the left, also referred to as algebraic spinors.
This definition goes back to Cartan, Weyl, etc., and is widely adopted in the mathematical physics community, see e.g.\cite{Polchinski:1998rr,Vaz:2016qyw,Lou,jost2008riemannian}.
On the face of it this definition is something quite different from the traditional view of spinors within the GA community, as members of the even subalgebra which act grade preservingly on vectors; a definition due to Hestenes \cite{GeometricCalculus}.

In an attempt to bridge this gap we take a different approach from other authors such as e.g. \cite{FRANCIS2005383}:
we show from first principles how some characteristic properties of spinors of arbitrary dimensions can be captured by real geometrical structures we call \emph{pointors}, which are the sum of an even and an odd versor.
This reveals that Hestenes' definition of spinors tells only half the story: the even half.

Although pointors have a lot in common with traditional spinors, and share many of their properties, they are not the same thing for $d > 3$.
It is for this reason that we chose to introduce them under a new name: so as not to add to the confusion around the proper definition of a spinor.
In future work we will show how to constrain a pointor to a spinor.

We start this section with a trigger warning:
in this section, we will sin against our philosophy of dimension and signature agnosticism by only considering all positive or all negative signatures (\GR{d} or \GR{0,d}). We will atone for this sin in future work on this subject, but since the goal of the current work is to give an introduction, this sin should be forgivable.

\subsection{Algebraic Spinors}\label{sec:traditional_spinors}

In order to build spinor representations, one
ordinarily starts by choosing the dimension of the space $d$, which is either an even number $d = 2k$ or an odd number $d = 2k + 1$. One then selects a set of $k$ commuting elements $\bl_j$ of the Lie algebra; the Cartan subalgebra, and one takes the corresponding set of eigenvectors $w_{\pm j} = \tfrac{1}{2}(u_j \pm i v_j)$. Recall that these eigenvectors are necessarily isotropic (null), as discussed in \cref{sec:inv_decomposition}. In odd dimensions there is one extra (possibly non-null) basisvector $v_{2k + 1}$ which commutes with all $\bl_j$.
Using all the isotropic eigenvectors, one constructs the master idempotent
    \begin{equation}
        \boxplus \coloneqq \prod_{j=1}^k w_{+j} w_{-j}.
    \end{equation}
One then defines the set of normalized bivectors $\beta_j = \bl_j / \mu_j$, where $\mu_j$ is the eigenvalue of $w_{+j}$. This ensures that the $\beta_j$ are all normalized to satisfy $\beta_j^2 = 1$.
The idempotent $\boxplus$ has eigenvalue $+1$ for every $\beta_j$ since $\beta_j \boxplus = + \boxplus$, and is annihilated by multiplication with any $w_{+j}$ from the left: $w_{+j} \boxplus = 0$.
We can now form spinor basis states $\eta_{\vec{s}}$ with eigenvalues $\vec{s} = ( s_1, \ldots, s_k )$, where $s_j = \pm 1$, by acting on $\boxplus$ with the lowering operators $w_{-j}$:
    \begin{equation}
        \eta_{\vec{s}} = \bqty{\prod_{j=1}^k w_{-j}^{(1 - s_j) / 2}} \boxplus.
    \end{equation}
Due to the isotropy of the basisvectors $w_{\pm j}$, all $\eta_{\vec{s}}$ satisfy $\widetilde{\eta}_{\vec{s}} \ \eta_{\vec{s}} = 0$.
Moreover, the $\eta_{\vec{s}}$ form a basis for the spinor space $S$, whose elements $\eta \in S$ are a linear combination of the form 
\[ \eta = \sum_{\vec{s}} c_{\vec{s}} \ \eta_{\vec{s}},\] 
where $c_{\vec{s}} \in \mathbb{C}$ and where the sum is understood to range over all $2^k$ possible $\vec{s}$. 
The product of all $\beta_j$ defines the chiral operator $\Gamma$:
    \begin{equation}
        \Gamma \coloneqq \prod_{j=1}^k \beta_j \; ,
    \end{equation}
which satisfies $\Gamma^2 = 1$, commutes with all $\beta_j$, and anti-commutes with all vectors $w_{\pm j}$ but (in odd dimensions) commutes with $v_{2k + 1}$.
Therefore the chiral operator $\Gamma$ is simply a (complex) scalar multiple of the point $O_{2k}$.
The chiral operator $\Gamma$ distinguishes between those spinor states $\eta_{\vec{s}}$ with an even or odd number of $s_j = -1$, which can be used to form the chiral projectors
    \[ P_{L/R} = \tfrac{1}{2} (1 \pm \Gamma). \]
Using these projectors, $\eta$ can be split into $\eta_{L/R} = P_{L/R} \eta$, known as the left and right Weyl spinors.
There are several important observations to make:
\begin{enumerate}
    \item Basis spinors $\eta_{\vec{s}}$ are essentially binary labels, characterized by a series of $\pm 1$ eigenvalues under the multiplicative action of the bivectors $\beta_j$.
    \item The spinor space $S$ is a $2^{k}$-dimensional complex vector space ($2^{k+1}$ over the reals).
    \item The spinor space $S$ includes an equal number of even and odd elements, and can be split into the equally sized spaces $S^{\pm}$ using the projectors $P_{L/R}$, each of which is a $2^{k-1}$-dimensional complex vector space ($2^{k}$ over the reals).
\end{enumerate}
The question now is: can we preserve these properties by walking the path of real geometrical invariants, instead of the imaginary null path?

\subsection{Back to Reality: Pointors} \label{sec:geometric_spinors}
With the traditional view of spinors in the back of our mind, it is time to hit the road to reality.
Let us pause on the eigenvectors $w_{\pm}$ of a simple bivector $\bl = u v$ one more time, where $u^2 = v^2 = 1$ and $u \cdot v = 0$. 
Any vector $r = \cos(\theta_-) u + \sin(\theta_-) v$ with $\theta_- \in \mathbb{R}$ satisfies $\bl \wedge r = 0$, and can be decomposed in the eigenbasis as 
\[ r = e^{-i \theta_-} w_+ + e^{i \theta_-} w_- \ .\] 
Similarly, any bireflection $R = e^{\theta_+ \bl}$ 
can be written in the eigenbasis as 
    \[ R = e^{i \theta_+} w_+ w_- + e^{-i \theta_+} w_- w_+ \ . \]
To describe spinors in 2D, we would take $\psi = \rho_- r + \rho_+ R$
and slam it into the master idempotent $w_+ w_-$, resulting in 
    \begin{equation}
        \Psi = (\rho_- r + \rho_+ R) w_+ w_- = c_- w_- + c_+ w_+ w_- \,
    \end{equation}
where $c_\pm = \rho_\pm e^{i \theta_\pm}$.
From this it is clear that the complex null spinor $\Psi$ represents the exact same degrees of freedom as the real object $\psi$. 
Moreover, $R$ and $r$ are two eigenstates of $\bl$ under conjugation, since $\bl[R] = + R$ and $\bl[r] = -r$, corresponding to $\bl \times w_+ w_- = + i w_+ w_-$ and $\bl \times w_- = - i w_-$ respectively in the traditional picture.
The positive and negative eigenstates under conjugation in $\bl$ can be extracted as
    \[ \psi_\pm = \tfrac{1}{2} (\psi \pm \bl \psi {\bl}^{-1}).\]
So in the 2D case we now have all the properties we want: two eigenstates under $\bl$ that can serve as binary labels, and two lots of two real degrees of freedom, making a total of four real (two complex) degrees of freedom.
The object $\psi = \rho_- r + \rho_+ R$ 
can affect the magnitude of the point $\bl$ because it satisfies $\psi \bl \widetilde{\psi} \propto \bl$. As such we call it a \emph{pointor}.

In order to make pointors more closely resemble the traditional spinors with complex coefficients,
we observe that
the bivector $\bl$ behaves just like an imaginary unit,
so the complex numbers $c_\pm = \rho_\pm e^{i \theta_\pm}$ can be replaced by the real bireflections $\rho_\pm e^{\bl \theta_\pm}$.
Additionally, in order to measure anything, we need to bring a measuring stick. In other words, we need a reference in both the even and odd subalgebra. We will chose $\psi_-^{\text{ref}} = u$ and $\psi_+^{\text{ref}} = u^2 = 1$, where $u$ is any normalized vector satisfying $\bl \wedge u = 0$.
Now any pointor $\psi = \rho_- r + \rho_+ R$ is of the form
    \begin{equation}
        \psi = \rho_- e^{\bl \theta_-} \psi_-^{\text{ref}} + \rho_+ e^{\bl \theta_+} \psi_+^{\text{ref}}.
    \end{equation}
This form looks a lot more similar to the $\ket{\psi} = c_- \ket{-} + c_+ \ket{+}$ from quantum mechanics textbooks, although the pointor $\psi$ is still a flatlander whereas $\ket{\psi}$ is typically defined in 3D space.

Using operations in the 2D plane in which this flatland pointor lives, it is impossible to mix the positive and negative components.
However, in the presence of a third dimension, the positive and negative components can be mixed, for example by a rotation around $v_3 u$ where $v_3$ is the new plane orthogonal to flatland. 
The presence of a third dimension does not change anything about the flatland spinor qualitatively however; just like a single bireflection in 2D or in 3D is exactly the same thing, but it does mean the degrees of freedom become harder to isolate.
In order to keep the degrees of freedom easy to manipulate, 
we can use the vector $v_3$ orthogonal to $\bl = O_2$ to lift the entire pointor into the even subalgebra of 3D space by forming
    \begin{equation}
        \psi = \rho_- e^{\bl \theta_-} \psi_-^{\text{ref}} v_3 + \rho_+ e^{\bl \theta_+} \psi_+^{\text{ref}}.
    \end{equation}
It is straightforward to show that such a pointor is a quaternion, and is identical to Pauli spinors \cite{GA4Ph,FRANCIS2005383}.

From 4D and up we have something new, because there are at least two commuting $\bl_j$ that form the point $O_4 = \bl_1 \bl_2$. 
We therefore have reference reflections $\psi_{-+}^{\text{ref}} = u_1$ and $\psi_{+-}^{\text{ref}} = u_2$ such that we get the reference states
    \begin{equation}
        \psi_{++}^{\text{ref}} = 1, \quad \psi_{-+}^{\text{ref}} = u_1, \quad \psi_{+-}^{\text{ref}} = u_2, \quad \psi_{--}^{\text{ref}} = u_1 u_2,
    \end{equation}
which are rotated into place by ``hyper-complex'' numbers $\rho \exp(\theta_1 \bl_1 + \theta_2 \bl_2)$.
For convenience we will label these reference states $\psi_{\vec{s}}$, where $\vec{s} = ( s_1, s_2 )$ is again a tuple of eigenvalues that can be used to label the states.
A general 4D pointor is of the form
    \begin{equation}\label{eq:4D_pointor}
        \psi = \sum_{\vec{s}} \psi_{\vec{s}} = \sum_{\vec{s}} \rho_{\vec{s}} \, e^{\sum_j \theta_{\vec{s} j} \bl_j} \psi_{\vec{s}}^{\text{ref}} \, . 
    \end{equation}
It is important to note that flatland spinors are included in this definition, and are therefore still valid pointors in 4D, something demanded by the first postulate, but which is not true in the traditional view of spinors since e.g. $w_{+1} w_{-1}$ is not an eigenstate of $\bl_2$ under left multiplication.

The question naturally arises if there is an easier way to think about \cref{eq:4D_pointor}, which leads to the following theorem:
\begin{theorem}
        For $k < 3$,
            \[ \psi = \sum_{\vec{s}} \psi_{\vec{s}} = \rho_+ R + \rho_- P, \]
        where $\rho_\pm \in \mathbb{R}$, $R \in \Spin{p,q}$ and $P \in \Pin[-]{p,q}$. 
    \end{theorem}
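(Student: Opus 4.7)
The plan is a case analysis on $k \in \{0,1,2\}$, with the content concentrated in $k=2$. For $k=0$ the pointor is a scalar and the statement is immediate; for $k=1$ the expression $\psi = \rho_+ e^{\bl \theta_+} + \rho_- e^{\bl \theta_-} u$ developed earlier in \cref{sec:geometric_spinors} directly gives $R = e^{\bl \theta_+} \in \Spin{p,q}$, and using $u \bl = -\bl u$ one rewrites $e^{\bl \theta_-} u = \cos(\theta_-)\, u + \sin(\theta_-)\, v$ as a single unit vector lying in $\Pin[-]{p,q}$.

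For $k=2$ we split $\psi$ into its even and odd parts according to the parity of the reference states,
\[ \psi_{\text{even}} = \rho_{++} R_{++} + \rho_{--} R_{--}\, u_1 u_2, \qquad \psi_{\text{odd}} = \rho_{-+} R_{-+}\, u_1 + \rho_{+-} R_{+-}\, u_2, \]
where $R_{\vec{s}} = \exp(\sum_j \theta_{\vec{s} j} \bl_j)$ lies in the abelian subgroup of $\Spin{p,q}$ generated by $\bl_1, \bl_2$. The pivotal identity is the anticommutation $\bl_j(u_1 u_2) = -(u_1 u_2) \bl_j$, which exponentiates to $R(u_1 u_2) = (u_1 u_2) \widetilde{R}$ for every such $R$. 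Using this, the cross terms in $\psi_{\text{even}} \widetilde{\psi_{\text{even}}}$ and $\psi_{\text{odd}} \widetilde{\psi_{\text{odd}}}$ cancel, leaving the positive scalars $\rho_+^2 := \rho_{++}^2 + \rho_{--}^2$ and $\rho_-^2 := \rho_{-+}^2 + \rho_{+-}^2$. To exhibit $R$ explicitly, introduce $A := R_{++} R_{--}\, u_1 u_2$: the same anticommutation gives $\widetilde{A} = -A$ together with $A \widetilde{A} = 1$, so $A^2 = -1$, and a grade expansion of $u_1 u_2 R_{++} R_{--}$ shows $A \wedge A = 0$, i.e.\ $A$ is a simple bivector. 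Hence $\cos\phi + \sin\phi A = e^{\phi A}$ is a rotor with $\cos\phi = \rho_{++}/\rho_+$, and
\[ \psi_{\text{even}} = \rho_+ R_{++}\, e^{\phi A} =: \rho_+ R, \qquad R \in \Spin{p,q}. \]
For the odd part we right-factor $u_1$ to obtain $\psi_{\text{odd}} = (\rho_{-+} R_{-+} - \rho_{+-} R_{+-}\, u_1 u_2)\, u_1$; the bracketed quantity has exactly the same algebraic shape as $\psi_{\text{even}}$, so the previous step gives it as $\rho_- R'$ with $R' \in \Spin{p,q}$, whence $\psi_{\text{odd}} = \rho_- (R' u_1)$ and $R' u_1 \in \Pin[-]{p,q}$ as the product of a rotor and a unit vector.

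The main obstacle, and precisely the reason the hypothesis $k<3$ is essential, is the simplicity of the auxiliary element $A$: for $k=2$ it follows from a small combinatorial cancellation in its bivector expansion, but for $k \ge 3$ the analogous element acquires genuine higher-grade blade content, so $e^{\phi A}$ is no longer a rotor and this direct factorization collapses. In that regime the $2^k$ reference states will have to be regrouped using invariant decomposition within the Cartan subalgebra itself, which is the subject of future work.
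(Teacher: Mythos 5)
Your proof is correct in its essential content, and it takes a somewhat more constructive route than the paper's. The paper argues via polar decomposition $\psi_L = SR$ with $S$ self-reverse and reduces the claim to showing $\psi_L\widetilde{\psi}_L\in\mathbb{R}$, establishing the cross-term cancellation with exactly the ingredient you identify (pushing the abelian rotor through $u_1u_2$ and invoking $u_1\cdot u_2=0$). You instead exhibit the rotor explicitly through the auxiliary element $A$, which is more informative: it not only certifies that the even part has scalar norm but produces the actual factorization $\psi_{\text{even}}=\rho_+R_{++}e^{\phi A}$. Two remarks. First, there is a sign/ordering slip in the definition of $A$: as written, $R_{++}e^{\phi A}$ expands to $\rho_+\cos\phi\,R_{++}+\rho_+\sin\phi\,R_{++}^2R_{--}u_1u_2$, which only matches $\psi_{\text{even}}$ if $R_{++}^2=1$. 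The correct choice is $A:=\widetilde{R}_{++}R_{--}u_1u_2$ (equivalently $R_{++}^{-1}R_{--}u_1u_2$), so that $R_{++}A=R_{--}u_1u_2$; your computations of $\widetilde{A}=-A$, $A\widetilde{A}=1$, and hence $A^2=-1$ all go through unchanged with this correction, and in fact $A^2\in\mathbb{R}$ alone already forces $A\wedge A=0$. Second, the paper's proof also establishes the converse inclusion -- that any even versor $\rho R$ arises as some $\sum_{\vec s}\psi_{\vec s}$, by taking the invariant decomposition of $R$ to define the $\bl_j$, whence $\rho R=\psi_{++}$ -- which your proposal omits; it is a quick step but is part of what the paper's proof shows. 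Your heuristic for why $k<3$ is essential (loss of simplicity of the analogue of $A$) is a reasonable companion to the paper's framing in terms of failing cross-term cancellation, though for $k\geq3$ the even part has $2^{k-1}>2$ summands, so the two-term construction of a single $A$ would in any case have to be replaced by something more elaborate.
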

    \begin{proof}
        The even and odd parts can be treated separately, and are defined as 
        \[ \psi_{L/R} = \tfrac{1}{2} (\psi \pm O \psi O^{-1}).\]
        We have to show that any even/odd versor can be written as a linear combination of invariants $\psi_{\vec{s}}$, and conversely that any $\psi_{L/R}$ is an even/odd versor.
        We only present the proof for $\psi_L$, since the proof for $\psi_R$ follows \emph{mutatis mutandis}.

        For the first part of the proof, assume that $\psi_L = \rho R$ is a versor, with $\rho \in \mathbb{R}$ and $R \in \Spin{p,q}$ a $2k$-reflection. We already know that $R$ permits an invariant decomposition as described in \cref{sec:inv_decomposition}, which yields the natural set of $\bl_j$ to decompose the point $O$.
        We now immediately know that $\psi = \psi_{++}$ with all positive eigenvalues, and therefore it is possible to express any versor as $\psi = \sum_{\vec{s}} \psi_{\vec{s}}$.

        For the second part of the proof, we use that $\psi_L$ permits a polar decomposition into $\psi_L = S R$, where $R \in \Spin{p,q}$ is a quadreflection, and $S$ is an even self-reverse element. 
        To prove that $\psi_L$ is a versor, it therefore suffices to show that $\psi_L \widetilde{\psi}_L = S^2 \in \mathbb{R}$. 
        Since $\psi_L = \psi_{\scriptscriptstyle ++} + \psi_{\scriptscriptstyle --}$, and each $\psi_{\vec{s}}$ has a scalar norm, we need to show that the cross terms are zero, i.e.
        \[ (\psi_{\scriptscriptstyle ++} + \psi_{\scriptscriptstyle --}) \widetilde{(\psi_{\scriptscriptstyle ++} + \psi_{\scriptscriptstyle --})} 
        \stackrel{?}{=} \psi_{\scriptscriptstyle ++} \widetilde{\psi}_{\scriptscriptstyle ++} + \psi_{\scriptscriptstyle --} \widetilde{\psi}_{\scriptscriptstyle --}
        \implies 
        \psi_{\scriptscriptstyle ++} \widetilde{\psi}_{\scriptscriptstyle --} + \psi_{\scriptscriptstyle --} \widetilde{\psi}_{\scriptscriptstyle ++} = 0.\]
        But there exists parameters $\rho, \delta_1, \delta_2 \in \mathbb{R}$ such that
            \begin{align*}
                \psi_{\scriptscriptstyle ++} = \rho e^{\delta_1 \bl_1 + \delta_2 \bl_2} u_1 u_2 \psi_{\scriptscriptstyle --} \, .
            \end{align*}
        We then find by direct computation
            \begin{align*}
                \psi_{\scriptscriptstyle ++} \widetilde{\psi}_{\scriptscriptstyle --} + \psi_{\scriptscriptstyle --} \widetilde{\psi}_{\scriptscriptstyle ++} 
                &= \pqty{\rho e^{\delta_1 \bl_1 + \delta_2 \bl_2} u_1 u_2 + \rho u_2 u_1 e^{-\delta_1 \bl_1 - \delta_2 \bl_2}} \psi_{\scriptscriptstyle --} \widetilde{\psi_{\scriptscriptstyle --}} \\  
                &= \rho \, e^{\delta_1 \bl_1 + \delta_2 \bl_2} \pqty{ u_1 u_2 + u_2 u_1} \psi_{\scriptscriptstyle --}  \widetilde{\psi_{\scriptscriptstyle --}} = 0,
            \end{align*}
        where the last equality follows because $u_1 \cdot u_2 = 0$ by definition. Consequently $\psi_L \widetilde{\psi}_L \in \mathbb{R}$ when $k < 3$ and thus $d < 6$.
    \end{proof}
In 5D, the extra vector $v_5$ orthogonal to $O_4$ can again be used to lift everything up to the even subalgebra.

When we generalize to $k \geq 3$ however, we can no longer claim that $\psi = \sum_{\vec{s}} \psi_{\vec{s}}$ is a suitable generalization, due to the need for extra constraints to ensure cancellation of cross terms. 
This brings us to the formal definition of a \emph{pointor}.
\begin{definition}[Pointor]
    A pointor is defined to be an element $\psi$ in the subspace defined by the point $O$ which satisfies
        \[ \psi O \widetilde{\psi} = \rho O, \] 
    where $\rho \in \mathbb{R}$.
    Moreover, $\psi$ is a unit-pointor if $\rho = 1$.
\end{definition}
In other words, pointors leave points in place, but can potentially affect their magnitude. 
The following theorem gives the sought-after generalization of pointors to arbitrary dimensions.
\begin{theorem}
    Pointors are elements of the form
        \[ \psi = \rho_+ R + \rho_- P, \]
    where $\rho_\pm \in \mathbb{R}$, $R \in \Spin{p,q}$ and $P \in \Pin[-]{p,q}$. 
\end{theorem}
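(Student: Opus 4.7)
The strategy is to decompose $\psi$ into its even and odd parts with respect to the point $O$, and then to show that the pointor condition $\psi O \widetilde{\psi} = \rho O$ forces each part to be a real scalar multiple of a suitable versor. Using the projector $\tfrac{1}{2}(1 \pm O(\cdot)O^{-1})$ from \cref{sec:point}, I would first write $\psi = \psi_L + \psi_R$, where $\psi_L$ commutes and $\psi_R$ anticommutes with $O$ (in odd dimension, substituting $O_{2k} = \prod_j \bl_j$ from \cref{eq:point}, which still splits the local algebra since $O_{2k}$ is the ``even factor'' of $O$).

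Substituting this decomposition into the pointor condition and exploiting the (anti)commutation of $O$ with elements of definite parity, the identity splits into a self-term constraint of the form $\psi_L \widetilde{\psi_L} \mp \psi_R \widetilde{\psi_R} \in \mathbb{R}$ together with a cross-term vanishing condition $\psi_L \widetilde{\psi_R} = \psi_R \widetilde{\psi_L}$. I would then apply a polar decomposition to each piece, writing $\psi_L = S_L R$ with $R \in \Spin{p,q}$ and $\psi_R = S_R P$ with $P \in \Pin[-]{p,q}$, where $S_L, S_R$ are positive self-reverse elements commuting with $R$ and $P$ respectively. For the cross-term, I would use \cref{th:invariantdecomposition} to bring $R$ and $P$ into commuting simple bireflection form and show that the vanishing is automatic once $S_L$ and $S_R$ have been reduced to scalars, generalizing the identity $\psi_{\scriptscriptstyle ++} = \rho\, e^{\delta_1 \bl_1 + \delta_2 \bl_2} u_1 u_2\, \psi_{\scriptscriptstyle --}$ used in the preceding $k<3$ proof.

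The main obstacle is the step from $S_L \widetilde{S_L} \in \mathbb{R}$ to $S_L \in \mathbb{R}$: a self-reverse element whose square is a real scalar need not itself be a scalar, so the reduction must exploit the specific geometric constraints imposed by the pointor condition together with the invariant decomposition, rather than treating $S_L$ as arbitrary. I expect the cleanest route is grade-by-grade, using that the invariant decomposition of $R$ forces $S_L$ to lie in the centralizer of the simple bivectors $\bl_j$, and that combining this with the cross-term equation and $S_L = \widetilde{S_L}$ eliminates every grade above zero. The already-established $k = 1, 2$ cases should then serve as the base of an induction on the number of simple bireflections appearing in the invariant decomposition of $R$ and $P$.
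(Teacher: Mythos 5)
Your strategy differs from the paper's in a way that creates extra work, and you leave the crux unresolved. The paper applies a \emph{single} polar decomposition to the full element, $\psi = S R$ with $S$ self-reverse (so $S$ is allowed grades $0,1,4,5,8,\ldots$, carrying both parities) and $R \in \Spin{p,q}$. Then $\psi O \widetilde{\psi} = S\,R O \widetilde{R}\,S = SOS$, and the entire pointor condition collapses to a single constraint on $S$: $SOS \propto O$. There is no even/odd split, no cross-term condition, and no induction. By contrast, you first project onto parities, then run two separate polar decompositions $\psi_L = S_L R$, $\psi_R = S_R P$, and must manage both a coupled self-term constraint $\psi_L\widetilde{\psi_L} - \psi_R\widetilde{\psi_R} \in \mathbb{R}$ and a cross-term equation $\psi_L\widetilde{\psi_R} = \psi_R\widetilde{\psi_L}$. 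These conditions do not individually control $S_L$ or $S_R$, which is precisely why the paper avoids splitting: the single $S$ is forced to be scalar-plus-vector by $SOS \propto O$ alone, and the even/odd pieces of $\psi = (\alpha + v)R = \alpha R + v R$ then drop out automatically as $\rho_+ R$ and $\rho_- P$.

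The genuine gap is the step you yourself flag: getting from $S_L \widetilde{S_L} \in \mathbb{R}$ (which, note, you do not even have separately for $S_L$ -- only the difference with the $S_R$ term is controlled) to $S_L \in \mathbb{R}$. Your sketch of ``grade-by-grade, centralizer of the $\bl_j$, induction on the number of bireflections'' names plausible ingredients but does not assemble them into an argument; in particular, a self-reverse even element with grades $0,4,8,\ldots$ can lie in the centralizer of all $\bl_j$ without being scalar (e.g.\ a multiple of $\bl_1\bl_2$), so the centralizer observation by itself does not eliminate grade $4$. The paper closes this exact hole by observing that $SOS = (S\bar S)O$ forces $S\bar S \in \mathbb{R}$, where $\bar S$ is the Clifford conjugate, and that this fails for self-reverse $S$ containing grades $\geq 4$; that is the missing mechanism, and it is far simpler applied to the undecomposed $S$ than to two separate factors tied together by a cross-term identity. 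I would recommend dropping the parity split and running the polar decomposition on the whole $\psi$ as the paper does.
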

\begin{proof}
We start again from the polar decomposition $\psi = SR$, where $S$ is a self-reverse element and $R \in \Spin{p,q}$. We then find by direct computation
\begin{align}
    \psi O \widetilde{\psi} = S R O \widetilde{R} S = SOS \, .
\end{align}
In order for $SOS$ to be a scalar multiple of $O$, we must have $O S = \bar{S} O$, where $\bar{S}$ is the conjugate of $S$, such that $S \bar{S} \in \mathbb{R}$. 
If $S = \expval{S} + \expval{S}_1$, then  $O \expval{S}_1 = -\expval{S}_1 O$ and thus such an $S$ will satisfy $SOS = \small( S \bar{S} \small) O \propto O$.
However, a general $S = \expval{S} + \expval{S}_1 + \expval{S}_4 + \expval{S}_5 + \ldots$ will not satisfy $O S = \bar{S} O$, and thus we find that $\psi = S R$ is the sum of a scalar and vector multiple of $R$.
\end{proof}

The definition of a pointor is motivated by our observation that the point is the effective pseudoscalar of a tangent space defined by it, and it is therefore natural to ask which elements leave the point invariant.

A pointor can be decomposed into left and right Weyl pointors as
    \[ \psi_{L/R} = \tfrac{1}{2} (\psi \pm O \psi O^{-1}), \]
and even further decomposed into eigenstates $\psi_{\vec{s}}$ under conjugation with all the $\bl_j$, as shown in \cref{sec:point}. 
These eigenstates are of the form 
    \[ \psi_{\vec{s}} = \rho_{\vec{s}} \, e^{\sum_j \theta_{\vec{s} j} \bl_j} \psi_{\vec{s}}^{\text{ref}} \ . \]

\subsection{A Problem in Accounting?}\label{sec:accounting}
The astute reader might have noticed a significant problem in accounting. While traditional basis spinors $\eta_{\vec{s}}$ come with complex coefficients, basis point\-ors $\psi_{\vec{s}}$ come with ``hyper-complex'' coefficients $\rho_{\vec{s}} \exp\small(\sum_j \theta_j  \bl_j\small)$. So for $k \geq 2$ the pointor $\psi$ has more degrees of freedom than the traditional spinor $\Psi$.

There are several ways out of this conundrum. 
Firstly, we have to entertain the possibility that historically we mistook all the different ``imaginary units'' $\bl_j$ for the same imaginary unit $i$. 
After all, they all square to $-1$ and are mutually commuting, and if it walks like a duck, and it quacks like a duck... 
We observed the same thing in \cite{outer}, where we published a ``square root'' of the Cayley-Hamilton theorem, which was to the best of our knowledge previously unknown, and which leverages the same observation.
The success of proponents of the Hestenes definition of spinors in explaining a wide range of physics phenomena certainly suggests that life without complex minimal left ideals is possible.

Secondly, if complex minimal left ideals are reintroduced it is possible to force a loss of the distinction between the $\bl_j$, and to consider them all as the imaginary unit $i$. This can be achieved by throwing a pointor $\psi$ into the maw of the traditional master idempotent $\boxplus$: $\Psi = \psi \ \boxplus$, which works because 
\[ (\sum_j \theta_j  \bl_j) \boxplus = i (\sum_j \theta_j) \boxplus. \]
In this scenario the pointor $\psi$ can be thought of as the closest geometrically sensible object that can reproduce any traditional spinor $\Psi$, and therefore one can still think about the geometry of any spinor $\Psi$ as being that of a pointor $\psi$, modulo the idempotent $\boxplus$. 

Lastly, 
there is a way to achieve the same result with a specific \emph{real} idempotent.
This idempotent and its properties will be the subject of an upcoming paper, because it has far reaching consequences well beyond the scope of the current work.

\subsection{Hestenes Spinors}\label{sec:hestenes_spinors}
The key distinction between Hestenes' spinors and pointors is the fact that Hestenes prioritizes grade preservation on vectors, whereas pointors are defined to preserve points.
To be precise, in \cite[Section 3.8]{GeometricCalculus}, a spinor is defined as an element $\phi$ satisfying 
    \[ \phi x \widetilde{\phi} = \rho y, \]
where $\rho \in \mathbb{R}$  and $x, y \in \GR[(1)]{p,q}$ are vectors satisfying $x^2 = y^2$. 
By contrast, we defined a pointor $\psi$ as an element in the subspace defined by a point $O$ which satisfies
    \[ \psi O \widetilde{\psi} = \rho O \, .\]
At first glance the geometric narratives are strikingly similar, because in the point based worldview that Hestenes' spinors are defined in, vectors are used to point at points, and hence both of these equations attempt to describe how spinors affect points.
But because careful geometric analysis has revealed points in $d$-dimensional pseudo-Euclidean spaces to be identified with $d$-blades, the consequences of the two viewpoints are wildly different.
Whereas Hestenes' spinors rotate the point $x$ to the point $y$ and scale its magnitude by $\rho$, pointors instead leave the point where it is and only scale its magnitude.
Future research will explore the consequences of this significant difference.

\section{Conclusion \& Future Research}

Careful geometric analysis has revealed points in $d$-dimensional pseudo-Eucli\-dean spaces \GR{p, q, 1} to be $d$-blades.
This has far reaching consequences: it means that points can be factored into $d$ orthogonal hyperplanes and hence points are the pseudoscalars of a local \GR{p,q} algebra.
Moreover, this factorization is not unique, and hence points have a local $\Spin{p,q}$ geometric gauge group.
Furthermore, the factorization of a point defines a commuting set of $k = \floor{d / 2}$ bivectors, which 
is exactly what traditional descriptions of spinors use to label basis spinors.

This raised the suspicion that spinors might just be those elements which leave points invariant, which led us to the definition of \emph{pointors}: the sum of an even and odd versor which can only affect the magnitude of points but leave their location unchanged.
But space is full of points, and at each point there can be a non-trivial pointor affecting its magnitude.
This sounds a lot like a field theory already, and we got there merely by thinking about points.

Moreover, ordinarily the gauge groups of the standard model are seen as abstract internal degrees of freedom, without any geometrical motivation.
But this work reveals that points do in fact have \emph{geometrical} internal gauge degrees of freedom!
Future work will have to show if (some of) the internal gauge degrees of the standard model are of the geometric kind.

We were careful to distinguish pointors from spinors, because as discussed in \cref{sec:accounting}, pointors have more degrees of freedom than traditional spinors.
Although we expect pointors to be useful in their own right, we reserve the term spinor for pointors that have the traditional number of degrees of freedom.
In upcoming work we will show a novel way to construct spinors from pointors by throwing away the distinction between all $\bl_j$ in an unexpected way.

\section*{Acknowledgement}
The authors would like to thank the organizers of the ``Clifford Algebra and the Fundamental Forces of Nature'' session at the ICCA13 conference for the invitation to present aspects of this work as well as for organizing a stimulating conference.

\bibliographystyle{spmpsci}
\bibliography{biblio.bib}

\begin{thebibliography}{10}
\providecommand{\url}[1]{{#1}}
\providecommand{\urlprefix}{URL }
\expandafter\ifx\csname urlstyle\endcsname\relax
  \providecommand{\doi}[1]{DOI~\discretionary{}{}{}#1}\else
  \providecommand{\doi}{DOI~\discretionary{}{}{}\begingroup \urlstyle{rm}\Url}\fi

\bibitem{GAME23Steven}
De~Keninck, S.: The first postulate: Transformations for games and graphics. (2023)

\bibitem{GA4Ph}
Doran, C., Lasenby, A.: Geometric Algebra for Physicists.
\newblock Cambridge University Press, Cambridge (2003).
\newblock \doi{10.1017/CBO9780511807497}.
\newblock \url{https://doi.org/10.1017/CBO9780511807497}

\bibitem{outer}
Eelbode, D., Roelfs, M., De~Keninck, S.: Outer exponentials and the invariant decomposition (Manuscript in preparation)

\bibitem{FRANCIS2005383}
Francis, M.R., Kosowsky, A.: The construction of spinors in geometric algebra.
\newblock Annals of Physics \textbf{317}(2), 383--409 (2005).
\newblock \urlprefix\url{https://doi.org/10.1016/j.aop.2004.11.008}.
\newblock \url{https://doi.org/10.1016/j.aop.2004.11.008}

\bibitem{GeometricCalculus}
Hestenes, D., Sobczyk, G.: Clifford algebra to geometric calculus : a unified language for mathematics and physics.
\newblock D. Reidel ; Distributed in the U.S.A. and Canada by Kluwer Academic Publishers, Dordrecht; Boston; Hingham, MA, U.S.A. (1984)

\bibitem{jost2008riemannian}
Jost, J.: Riemannian Geometry and Geometric Analysis.
\newblock Universitext. Springer Berlin Heidelberg (2008)

\bibitem{Lou}
Lounesto, P.: Clifford Algebras and Spinors, 2 edn.
\newblock London Mathematical Society Lecture Note Series. Cambridge University Press (2001).
\newblock \doi{10.1017/CBO9780511526022}.
\newblock \url{https://doi.org/10.1017/CBO9780511526022}

\bibitem{Polchinski:1998rr}
Polchinski, J.: {String theory. Vol. 2: Superstring theory and beyond}.
\newblock Cambridge Monographs on Mathematical Physics. Cambridge University Press (2007).
\newblock \doi{10.1017/CBO9780511618123}.
\newblock \url{https://doi.org/10.1017/CBO9780511618123}

\bibitem{GSG}
Roelfs, M., De~Keninck, S.: Graded symmetry groups: Plane and simple.
\newblock Advances in Applied Clifford Algebras \textbf{33}(3), 30 (2023).
\newblock \url{https://doi.org/10.1007/s00006-023-01269-9}

\bibitem{Vaz:2016qyw}
Vaz Jr., J., da~Rocha, R.: {An Introduction to Clifford Algebras and Spinors}.
\newblock OUP (2016).
\newblock \doi{10.1093/acprof:oso/9780198782926.001.0001}.
\newblock \url{https://doi.org/10.1093/acprof:oso/9780198782926.001.0001}

\end{thebibliography}

\end{document}